\newtheorem{Theorem}{Theorem}[section]
\newtheorem{Lemma}[Theorem]{Lemma}
\newtheorem{Corollary}[Theorem]{Corollary}
\newtheorem{Proposition}[Theorem]{Proposition}
\newtheorem{Definition}[Theorem]{Definition}
\newtheorem{Example}[Theorem]{Example}
\title{Error-Correcting Codes on Projective Bundles over Deligne--Lusztig Varieties}
\author{Daniel Camaz\'on \footnote{Department of Algebra, Analysis, Geometry and Topology of the University of Valladolid} \footnote{The first author was partially supported by PGC2018-096446-B-C21}, Juan Antonio L\'opez Ramos \footnote{Department of Mathematics of the University of Almer\'ia} \footnote{The second suthor is supported by FQM 0211 Junta de Andaluc\'ia and Ministerio de Ciencia e Innovaci\'on PID2020-113552GB-I00}}
\date{daniel.camazon@uva.es, jlopez@ual.es}
\begin{document}
\maketitle

\begin{abstract}
The aim of this article is to give lower bounds on the parameters of algebraic geometric error-correcting codes constructed
from projective bundles over Deligne--Lusztig surfaces. The methods based on an intensive use of the intersection theory allow us to extend the codes previously constructed from higher-dimensional varieties, as well as those coming from curves. General bounds are obtained for the case of projective bundles of rank $2$ over standard Deligne-Lusztig surfaces, and some explicit examples coming from surfaces of type $A_{2}$ and ${}^{2}A_{4}$ are given.
\end{abstract}

\section{Introduction}

The theory of algebraic geometry codes arose in 1970, when V. Goppa discovered in \cite{goppa}
the relation between the theory of error-correcting codes and the evaluation on algebraic curves. They exhibited important properties that made researchers deepen their study. On the one hand, their good encoding--decoding algorithms led B. McEliece
(\cite{mceliece}) to consider them for a public-key cryptosystem, which is considered by NIST as an alternative for the post-quantum era. On the other hand, M.A. Tsfasman et al. (\cite{tsfasman}) were able to show that Goppa codes can be used to give examples of codes that go beyond the Gilbert--Varshamov bound. Since then, the interest in algebraic geometry has significantly increased.  We can cite \cite{barelli,christensen,jphansen,xing,munuera,stichtenorth} as a few of the many existing examples of study over Hermitian, Castle, Suzuki or G-H curves.

But the definition of algebraic geometry codes can go beyond. Tsfasman and Vl\v{a}du\k{t}, in \cite{tsfasman2}, suggested that higher-dimensional varieties can be used to construct these type of codes, although the number of works in this sense does not equal that of the curves, probably due to the difficulty of finding higher-dimensional varieties $X$, spaces of functions $L$ and sets of rational points ${\cal P}$, that can yield codes in the sense of Definition \ref{DefAGCodes}, which could be interesting due to their compelling properties concerning their weight distributions, minimum distance or fast encoding--decoding algorithms.
 The two-dimensional case has been relatively studied, like in the case of rational, hermitian or cubic surfaces, (cf. \cite{couvreur,edoukou,voloch}).

However, this is not the case in general for higher-dimensional varieties. The survey by J. Little (\cite{little}) offers a rather complete vision on the study of algebraic geometry codes defined over varieties in general. Among the referred papers, we can find the work by S. Hansen \cite{shhansen}, wherein the author makes an extensive use of the intersection theory to develop their study. In this paper, some of the provided examples concern Deligne--Lusztig varieties, whose importance in algebraic geometry comes from the fact that they are directly linked to finite groups of Lie type. Moreover, these varieties are characterized by their large number of rational points, which allows for the definition of algebraic geometry codes, as we will see. In his paper, S. Hansen applies general methods to obtain lower bounds on the parameters of algebraic geometric error-correcting codes defined from varieties of greater dimension, as is the case of Deligne--Lusztig surfaces. Our aim in this paper is to go one step further \cite{shhansen} and study this kind of codes defined on projective bundles over Deligne--Lusztig surfaces. In order to obtain lower bounds on the associated parameters, we make an intensive use of the intersection theory and take advantage of the fact that, for some standard Deligne--Lusztig surfaces, all their rational points are distributed equally on the disjoint rational curves, constituting the irreducible components of a divisor $D_{i}$. This allows us to give general bounds for the case of algebraic geometric error-correcting codes on projective bundles of rank $2$ as well as some explicit examples coming from surfaces of type $A_{2}$ and ${}^{2}A_{4}$.

\section{Deligne--Lusztig Varieties}

In this section, we shall define Deligne--Lusztig varieties and their compactification as well as study some of their main properties.

Let $(G,F)$ be a connected reductive algebraic group over an algebraically closed field $k$ of positive characteristic $p$, equipped with an $\mathbb{F}_{q}-$structure coming from a Frobenius morphism $F: G\rightarrow G$. Let $L: G\rightarrow G$ be the corresponding Lang map taking an element $g\in G$ to $g^{-1}F(g)$. By the Lang--Steinberg Theorem (see Theorem 4.4.17 in \cite{Springer98}), 
 \color{black} this morphism of varieties is surjective with finite fibers. From this result, it follows that, by conjugacy of Borel subgroups,
there exists an $F-$stable Borel subgroup $B$. Let $\pi: G\rightarrow G/B:= X$ denote the quotient. There are then (with a slight abuse of notation) natural endomorphisms $F: W\rightarrow W$ and $F: X\rightarrow X$ of the Weyl group of $G$ and the variety $X$ of Borel subgroups of $G$. Let $W$ be generated by the simple reflections $s_{1}, \dots, s_{n}$, and let $l(\cdot)$ be the length function with respect to these generators.

 Let us now recall from Definition 1 in \cite{Hansen02} the following definition of Deligne--Lusztig variety.

\color{black}
\begin{Definition}\label{DefDelLusVar}
Fix an element $w$ in the Weyl group $W$, and let $w=s_{i_{1}}\cdot \ldots \cdot s_{i_{r}}$ be a reduced expression of $w$. Call $w$ a Coxeter element if there in this expression occurs exactly one $s_{i}$ from each of the orbits of $F$ on $\left\{s_{1}, \dots, s_{n}\right\}$. Denote by $\delta$ the order of $F$ on this set. Then, the Deligne--Lusztig variety $X(w)$ is defined as the image of $L^{-1}(B\dot{w}B)$ in $G/B$. That is,
\begin{equation}
X(w)=\pi(L^{-1}(B\dot{w}B)).
\end{equation}
\end{Definition}

 Next, we will follow the notation and definitions given in \cite{Hansen02}. \color{black} Define the closed subvariety of $X^{r+1}$
\vspace{-12pt}

\begin{equation}
\overline{X}(s_{i_{1}}, \dots, s_{i_{r}})=\left\{(g_{0}B, \dots, g_{r}B)\in X^{r+1}: g_{k}^{-1}g_{k+1}\in B\cup Bs_{i_{k+1}}B\enspace\text{for}\enspace 0\leq k<r , g_{r}^{-1}F(g_{0})\in B\right\}
\end{equation}

\noindent In 
 those cases wherein there is a unique product $s_{i_{1}}\cdot\ldots\cdot s_{i_{r}}$ such that $s_{i_{1}}\cdot\ldots\cdot s_{i_{r}}=w$, we
shall write $\overline{X}(w)$ for the variety $\overline{X}(s_{i_{1}}, \ldots, s_{i_{r}})$.
For any subset $\left\{s_{j_{1}}, \ldots, s_{j_{m}}\right\}\subset\left\{s_{i_{1}}, \ldots, s_{i_{r}}\right\}$, $\overline{X}(s_{j_{1}}, \ldots, s_{j_{m}})$ defines in a natural way a closed subvariety of $\overline{X}(s_{i_{1}}, \ldots, s_{i_{r}})$. In particular, there are divisors
\begin{equation}
D_{j}=\overline{X}(s_{i_{1}}, \ldots, \hat{s}_{i_{j}}, \ldots, s_{i_{r}});\enspace j=1, \ldots, r.
\end{equation}

\begin{Example}
Let us consider the Deligne--Lusztig surface $\overline{X}(w)$ of type $A_{2}$. In this particular case, the connected reductive algebraic group is $G=GL(3,k)$ and the Frobenius map is given by
\begin{equation}
\xymatrix{F: g\ar[r] & g^{q}}.
\end{equation}

\noindent Moreover, the Weyl group $W$ of $G$ is isomorphic to the symmetric group of the three vectors in the base of $k^{3}$, and $\overline{X}(w)=\overline{X}(s_{1},s_{2})$, where $s_{1}$ corresponds to the permutation of the first and the second vectors of the base and $s_{2}$ corresponds to the permutation of the second and the third vector.
\end{Example}
\color{black}

When $G$ is semi-simple with connected Dynkin diagram $D$ (with numbering of nodes
and their associated simple reflections), there is a (unique) natural
choice of Coxeter element: let $w=s_{1}\cdot s_{2}\cdot\ldots\cdot s_{r}$ with $r$ maximal (under the condition
that $s_{r}$ is not in the $F-$orbit of any of the previous $s_{i}$, $i<r$). When choosing this particular Coxeter element, we shall refer to $X(w)$ (or $\overline{X}(w)$) as being of standard type. \\
We claim that $\overline{X}(w)$ is of classical type if $w$ is a Coxeter element for one of the following classical groups: $A_{n}, {}^{2}A_{2n}, {}^{2}A_{2n+1}, B_{m}, C_{n}, D_{n}$ or ${}^{2}D_{n}$. \\
For $w_{1},w_{2}\in W$, we shall say that $w_{1}$ and $w_{2}$ are $F-$conjugate if there exists $w{'}\in W$ such
that $w_{2}=w{'}w_{1}F(w{'})^{-1}$. It is worth noting that $w$ and $F(w)$ are $F-$conjugate for any $w\in W$.

Since the morphism $L$ is flat, it is open; hence, $\overline{L^{-1}(B\dot{w}B)}=L^{-1}(\overline{B\dot{w}B})$. Therefore, $X(w)$ is a non-singular variety of dimension $n$ and the closure of $X(w)$ in $X$, $\overline{X(w)}$, is given by the disjoint union
\begin{equation}
\overline{X(w)}=\bigcup_{w{'}\leq w} X(w{'}),
\end{equation}
where, as usual, $\leq$ is the Bruhat order in $W$. This closure is usually singular whenever the
Schubert variety $X_{w}=\overline{B\dot{w}B}/B$ is. But since the open subset
\begin{equation}
\left\{g_{0}B, \ldots, g_{r}B)\in X^{r+1}: g^{-1}_{k}g_{k+1}\in Bs_{i_{k}+1}B, 0\leq k<r, g^{-1}_{r}F(g_{0})\in B\right\}
\end{equation}	
of the smooth projective variety $\overline{X}(w)$ maps isomorphically onto $X(w)$ under projection to
the first factor, we have a good compactification of $X(w)$. In fact, the complement
of $X(w)$ in $\overline{X}(w)$, which is easily seen to be the union of the divisors $D_{j}$ defined above, is a
divisor with normal crossings.
If $w$ is a Coxeter element, then $X(w)$ and $\overline{X}(w)$ are irreducible and, in fact, $X(w)$ is isomorphic to $\overline{X}(w)$, and hence non-singular.

 It is worth noting that it follows from Definition \ref{DefDelLusVar} that, if $\overline{X}(w)$ is of classical type, then the irreducible components
of any Deligne--Lusztig subvariety of $\overline{X}(w)$ are of classical type too (see Remark 1 in \cite{Hansen02}).

\color{black}
The following result allows us to consider the image of standard Deligne--Lusztig varieties under a certain proper morphism as a normal strict complete intersection in a certain projective space $\mathbb{P}^{N-1}$ as well as to compute its Picard group.

\begin{Theorem}{\emph{Theorem 3 in \cite{Hansen02}}}\label{ThmHyp}
Let $\overline{X}(w)$ be a standard Deligne--Lusztig variety of type ${}^{2}A_{n}$, $B_{n}$, $C_{n}$, $D_{n}$ or ${}^{2}D_{n}$. Assume that $char(k)\neq 2$ in the orthogonal cases. Let $P$  be the parabolic subgroup generated by $B$ together with the double cosets $Bs_{2}B, Bs_{3}B, \ldots,Bs_{n}B$, and let
\begin{equation}
\pi: (G/B)^{l(w)+1}\rightarrow G/P\subseteq\mathbb{P}(V)\cong\mathbb{P}^{N-1}
\end{equation}
be the projection (projection to the first factor, followed by the quotient map). It is worth noting that the inclusion $G/P\subseteq\mathbb{P}(V)$ is an
equality in the non-orthogonal cases. Denote by $L^{e}$ the $e-$dimensional linear subspace of $\mathbb{P}^{N-1}$ obtained by setting the $N-1-e$ last coordinates as equal to zero.
\begin{enumerate}
\item The image $Z=\pi(\overline{X}(w))$ is a normal, strict complete intersection. In fact, Lusztig shows (see pp. 444--445 in \cite{Lusztig76}) that $Z$ equals the support of the scheme's theoretic complete intersection $Z{'}= \cap_{i=0}^{a_{0}(V)-1}H_{i}$, with $X(w)$ mapping isomorphically onto the open subset $\left\langle x,F_{V}^{a_{0}(V)}(x)\right\rangle\neq 0$ of $Z$  (see Table 2 in \cite{Hansen02}). \color{black} In the unitary and orthogonal cases, the singular locus of $Z$, $Z_{sing}$ consists of the finitely many $G^{F}-$translates of the closed subscheme $Z\cap L^{a_{0}(V)-1}$. Hence,
\begin{equation}
codim(Z_{sing},Z)=N+1-2a_{0}(V)+a_{0}(L^{a_{0}(V)-1}).
\end{equation}

\noindent In the symplectic case, $Z_{sing}$ consists of the $G^{F}-$translates of the closed subscheme $Z\cap L^{a_{0}(V)-2}$, and the previous formula becomes
\begin{equation}
codim(Z_{sing},Z)=2+a_{0}(L^{a_{0}(V)-2}).
\end{equation}
\item For $codim(Z_{sing},Z)\geq 4$, $Pic(Z)=\mathbb{Z}$ and consequently
\begin{equation}
\begin{split}
Pic(\overline{X}(w))=\mathbb{Z}\left[\pi^{*}H\right] & \oplus\mathbb{Z}\left[\left\{\left[V\right]:\enspace V\enspace\text{component of}\enspace D_{1}\right\}\right]\\
& \oplus j_{*} A_{l(w)-1}(\bigcup_{i\in\mathcal{I}-\left\{1\right\}}D_{i})
\end{split}
\end{equation}
where $H$ is the hyperplane section of $Z$, $j$ is the obvious inclusion and $\mathcal{I}$ is the set of indices $\left\{i\right\}$, satisfying that some connected component of the Dynkin diagram corresponding to $D_{i}$ occurs as a subgraph of the Dynkin diagram corresponding to $D_{1}$.
\item For any Coxeter element $w{'}$, we have
\begin{equation*}
Pic(\overline{X}(w{'}))_{p{'}}\cong Pic(\overline{X}(w))_{p{'}}.
\end{equation*}
\end{enumerate}
\vspace{-12pt}
\end{Theorem}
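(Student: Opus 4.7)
I would attack the three items in order, largely following Lusztig's explicit model in \cite{Lusztig76} and combining it with a Lefschetz-type argument and a descent argument.

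\emph{Item (i).} For a standard Coxeter element $w$ in one of the listed classical types, the composition $\pi$ lands in $G/P \subseteq \mathbb{P}(V)$, where $V$ is the natural representation. The plan is to follow Lusztig and write $Z$ as the zero locus of the $a_0(V)$ forms
\begin{equation*}
\langle x, F_V^{i}(x) \rangle = 0, \qquad i=0,\ldots,a_0(V)-1,
\end{equation*}
using the appropriate sesquilinear, bilinear or symplectic pairing on $V$. The open condition $\langle x, F_V^{a_0(V)}(x)\rangle \neq 0$ cuts out an open subscheme which is checked, by the Bruhat decomposition and the explicit form of the defining condition $g_r^{-1} F(g_0) \in B$, to be the isomorphic image of $X(w)$. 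Normality and the complete intersection property then reduce to dimension and Jacobian counts. For the singular locus, I would compute where the Jacobian of the $a_0(V)$ forms drops rank: in the unitary and orthogonal cases the degeneracy occurs exactly on $Z \cap L^{a_0(V)-1}$, while in the symplectic case the alternating form forces degeneration one step earlier on $Z \cap L^{a_0(V)-2}$. Translating by $G^F$ and using dimensions of linear sections yields the stated codimension formulas.

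\emph{Item (ii).} Granted the codimension bound $\mathrm{codim}(Z_{\mathrm{sing}},Z)\geq 4$, I would invoke the Grothendieck--Lefschetz theorem for Picard groups of normal complete intersections (SGA~2, Exp.~XII) to conclude $\mathrm{Pic}(Z) = \mathbb{Z}\cdot[H]$. Then the morphism $\pi:\overline{X}(w)\to Z$ is an isomorphism over $X(w)$, and its exceptional locus is exactly the normal-crossing boundary $\bigcup_i D_i$. Using the localization/excision exact sequence
\begin{equation*}
A_{l(w)-1}\!\Bigl(\bigcup_i D_i\Bigr) \xrightarrow{j_*} \mathrm{Pic}(\overline{X}(w)) \to \mathrm{Pic}(X(w)) \to 0,
\end{equation*}
I would identify $\mathrm{Pic}(X(w))$ with $\mathrm{Pic}(Z^\circ) = \mathbb{Z}[\pi^*H]$ on the smooth open part, then split off the contribution of those components of $D_1$ that are not contracted by $\pi$ (these are visible because $D_1$ is the divisor dropped by the projection to the first factor and corresponds to the Dynkin node not identified in $D_1$'s defining subexpression), and finally incorporate the remaining boundary classes via $j_*$. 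The index set $\mathcal{I}$ appears precisely as those $D_i$ whose Dynkin component is already a subgraph of the one attached to $D_1$, so that the corresponding cycles are not swallowed by the $D_1$-summand. Checking that the resulting sum is direct amounts to a dimension/rank count, performed using the intersection pairing against curves lying inside each boundary stratum.

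\emph{Item (iii).} Two Coxeter elements $w, w'$ are $F$-conjugate in $W$; the corresponding Deligne--Lusztig varieties are related by an explicit torsor-type isomorphism after base change. Passing to the $\ell$-adic cohomology ring and using that $\mathrm{Pic}_{p'}$ is captured by $H^1_{\mathrm{\acute{e}t}}(-,\mathbb{G}_m)_{p'}$, the $F$-conjugation induces an isomorphism on the prime-to-$p$ part that is defined over $\mathbb{F}_q$, yielding $\mathrm{Pic}(\overline{X}(w))_{p'} \cong \mathrm{Pic}(\overline{X}(w'))_{p'}$.

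\textbf{Main obstacle.} The most delicate point is Item (ii): getting the decomposition of $\mathrm{Pic}(\overline{X}(w))$ as a genuine direct sum (and not merely a filtration) requires a careful analysis of the contracted vs.\ non-contracted boundary components, which is why the Dynkin-diagram condition defining $\mathcal{I}$ enters. The case-analysis in Item (i) to pin down $Z_{\mathrm{sing}}$ scheme-theoretically in the orthogonal case (where $\mathrm{char}(k)\neq 2$ is needed) is the secondary technical difficulty.
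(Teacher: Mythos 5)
This statement is reproduced verbatim from Theorem~3 of \cite{Hansen02} --- the theorem header in the source reads literally ``Theorem 3 in \cite{Hansen02}'' --- and the present paper offers no proof of it. It is imported as a black box to supply the description of $Z$, the Picard group, and the divisor structure on $\overline{X}(w)$, which are then used downstream in Section~4. There is therefore no ``paper's own proof'' to compare your sketch against; to check the details you must go to Hansen's paper and to Lusztig's computation in \cite{Lusztig76}, pp.~444--445.

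With that caveat, a few remarks on the plan itself. For item~(i) your identification of the Lusztig model --- $Z$ as the support of the scheme-theoretic intersection $\bigcap_{i=0}^{a_0(V)-1}\{\langle x, F_V^i(x)\rangle = 0\}$ with $X(w)$ sitting inside as the locus $\langle x, F_V^{a_0(V)}(x)\rangle\neq 0$ --- and the proposed Jacobian analysis that places $Z_{\mathrm{sing}}$ on $Z\cap L^{a_0(V)-1}$ (unitary/orthogonal) or $Z\cap L^{a_0(V)-2}$ (symplectic) track the statement correctly. For item~(ii), appealing to the Grothendieck--Lefschetz theorem for Picard groups (SGA~2, Exp.~XII) under $\mathrm{codim}(Z_{\mathrm{sing}},Z)\geq 4$ to obtain $\mathrm{Pic}(Z)=\mathbb{Z}$, and then feeding into the excision sequence
\begin{equation*}
A_{l(w)-1}\Bigl(\bigcup_i D_i\Bigr) \longrightarrow \mathrm{Pic}(\overline{X}(w)) \longrightarrow \mathrm{Pic}(X(w)) \longrightarrow 0,
\end{equation*}
is the right toolkit. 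However, you acknowledge but do not resolve the genuine gap: passing from this right-exact sequence to the stated \emph{direct sum} decomposition into $\mathbb{Z}[\pi^*H]$, the free summand on components of $D_1$, and $j_*A_{l(w)-1}(\bigcup_{i\in\mathcal{I}\setminus\{1\}}D_i)$ requires establishing (a) a splitting, and (b) that the Dynkin-subgraph criterion defining $\mathcal{I}$ precisely captures which boundary cycles remain independent. Attributing this to ``a dimension/rank count'' via intersection pairings names the method without carrying it out; this is where the substance of Hansen's proof lies and where your outline would need to be filled in. Item~(iii) via $F$-conjugacy and the prime-to-$p$ part of $H^1_{\text{\'et}}(-,\mathbb{G}_m)$ is a plausible route, though Hansen's actual argument should be consulted for the descent step.
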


\section{Error-Correcting Codes Construction}

Tsfasman and Vl\v{a}du\k{t} introduced the following construction (generalizing the Goppa--Manin construction):

\begin{Definition}{\emph{H-construction p. 272 in \cite{tsfasman2}}}\label{DefAGCodes}.
\color{black} Let $X$ be a normal projective variety over $\mathbb{F}_{q}$. Let $L$ be a line bundle defined over
$\mathbb{F}_{q}$ and let $P_{1}, P_{2}, ...P_{n}$ be distinct $\mathbb{F}_{q}-$rational points on $X$. Set $\mathcal{P}=\left\{P_{1}, P_{2}, ...P_{n}\right\}$. In each $P_{i}$ , choose isomorphisms of the fibers $L_{P_{i}}$ with $\mathbb{F}_{q}$. The linear code $C(L,\mathcal{P})$ of length $n$ associated to $(X,L,\mathcal{P})$ is the image of the germ map
\begin{equation}
\alpha: \Gamma(X,L)\rightarrow \oplus_{i=1}^{n} L_{P_{i}}\cong\mathbb{F}_{q}.
\end{equation}

\noindent From now on, we assume that all line bundles considered
actually have a non-zero global section. \\
Suppose $L$ arises as the line bundle associated to a divisor $D$ and that the $P_{i}$
are not in the support of $D$. Then, we obtain the same code (up to isomorphism) as
when evaluating the rational functions
\begin{equation*}
L(D)=\left\{f\in k(X)^{*}: div(f)+D\geq 0\right\}
\end{equation*}
in the points $\mathcal{P}$.
\end{Definition}

The fundamental question is: Given a line bundle $L$ on $X$, how many zeros does a section $s\in\Gamma(X,L)$
have along a fixed set $\mathcal{P}$ of rational points? \\
Using the correspondence between line bundles and (Weil) divisors on a normal variety, we may reformulate the question as follows:
For a fixed line bundle $L$, and given an effective divisor $D$ such that $L=\mathcal{O}_{X}(D)$, how many points from $\mathcal{P}$ are in its support $\left|D\right|$?

Although in the particular case of $dim(X)=1$, where the points $P\in\mathcal{P}$ happen to be divisors, one may apply the Riemann--Roch theorem to give a lower bound on $d$ and a formula for $k$ in higher dimensions; however, we have to face the task of comparing objects of different dimension. This may be remedied in two ways:
\begin{enumerate}
\item Make the objects have the same codimension by blowing-up at the points;
\item Make the objects have complementary dimensions, that is, make the points in some way into curves.
\end{enumerate}

In the next section, we shall pursue the latter idea. In this case, the following result establishes a lower bound for the minimum distance.

\begin{Proposition}{\emph{Proposition 3.2 in \cite{shhansen}}}\label{ProMDEst}
Let $X$ be a normal projective variety defined over $\mathbb{F}_{q}$ of
dimension at least two. Let $C_{1}, C_{2}, ..., C_{a}$ be (irreducible) curves on $X$ with $\mathbb{F}_{q}-$
rational points $\mathcal{P}=\left\{P_{1}, P_{2}, \ldots, P_{n}\right\}$. Assume the number of $\mathbb{F}_{q}-$rational points on
rational $C_{i}$ is less than $N$. Let $L$ be a line bundle on $X$, defined over $\mathbb{F}_{q}$, such
that $L\cdot C_{i}\geq 0$ for all $i$. Let
\begin{equation*}
l=sup_{s\in\Gamma(X,L)}\#\left\{i:Z(s)\enspace\text{contains}\enspace C_{i}\right\}.
\end{equation*}

\noindent Then, the code $C(L,\mathcal{P})$ has length $n$ and minimum distance
\begin{equation*}
d\geq n-lN-\sum_{i=1}^{a}L\cdot C_{i}
\end{equation*}

\noindent If $L\cdot C_{i}=\eta\leq N$ for all $i$, then $d\geq n-lN-(a-l)\eta$.
\end{Proposition}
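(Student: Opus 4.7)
The plan is to translate the minimum distance into a geometric statement and then bound zeros of sections curve-by-curve using intersection theory. A codeword of $C(L,\mathcal{P})$ arises as the evaluation vector $(s(P_{1}),\dots,s(P_{n}))$ of a global section $s\in\Gamma(X,L)$, so its Hamming weight is $n-\#\{P\in\mathcal{P}:s(P)=0\}$. The task thus reduces to giving, uniformly in $s$, an upper bound on the number of rational points of $\mathcal{P}$ lying in the zero scheme $Z(s)$.

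My approach would be to dichotomize the curves according to whether $C_{i}\subseteq Z(s)$. Set $S(s)=\{i:C_{i}\subseteq Z(s)\}$ and $l_{s}=|S(s)|$, so $l_{s}\le l$ by definition of the supremum $l$. For $i\in S(s)$ every $\mathbb{F}_{q}$-rational point of $C_{i}$ is a zero of $s$, and by hypothesis each such $C_{i}$ has at most $N$ rational points, so these indices contribute at most $l_{s}N$ zeros in total. For $i\notin S(s)$, the restriction $s|_{C_{i}}$ is a nonzero section of $L|_{C_{i}}$; its divisor has degree $\deg(L|_{C_{i}})=L\cdot C_{i}$, so at most $L\cdot C_{i}$ distinct points of $C_{i}$ can be zeros of $s$ (passing to the normalization if $C_{i}$ is singular, since distinct rational points of $C_{i}$ pull back to at least as many rational points upstairs). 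Summing the two cases, and observing that any point of $\mathcal{P}$ lying on several $C_{i}$ is only overcounted, produces the single estimate
\[
\#\{P\in\mathcal{P}:s(P)=0\}\;\le\;l_{s}N+\sum_{i\notin S(s)}L\cdot C_{i}.
\]

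From this both claims of the proposition should follow. Using $L\cdot C_{i}\ge 0$ to extend the last sum to all $i$, and $l_{s}\le l$, yields $d\ge n-lN-\sum_{i=1}^{a}L\cdot C_{i}$. If furthermore $L\cdot C_{i}=\eta\le N$ for every $i$, the right-hand side of the estimate becomes $l_{s}N+(a-l_{s})\eta=a\eta+l_{s}(N-\eta)$, which is monotone non-decreasing in $l_{s}$ since $N\ge\eta$; maximizing at $l_{s}=l$ gives $d\ge n-lN-(a-l)\eta$. The only delicate step I anticipate is justifying the case-$2$ inequality $\#(Z(s)\cap C_{i})\le L\cdot C_{i}$: one must invoke that the intersection number equals $\deg\nu^{*}L$ on the normalization $\nu:\widetilde C_{i}\to C_{i}$ and that the number of distinct zeros of a nonzero section of a line bundle on a smooth projective curve is bounded by its degree. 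Everything else is bookkeeping.
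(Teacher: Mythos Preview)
The paper does not supply its own proof of this proposition: it is quoted verbatim as Proposition~3.2 of Hansen's paper \cite{shhansen} and used as a black box, so there is nothing in the present paper to compare against. Your argument is the standard one (and is essentially the proof Hansen gives): split the curves into those contained in $Z(s)$ and those not, bound the first group by $l_{s}N$ using the hypothesis on $\#C_{i}(\mathbb{F}_{q})$, bound the second by $\sum_{i\notin S(s)} L\cdot C_{i}$ via the degree of the restricted line bundle, and then optimize. Your handling of the two conclusions---extending the sum using $L\cdot C_{i}\ge 0$ for the first, and the monotonicity argument in $l_{s}$ for the second---is correct, and your remark about passing to the normalization to justify $\#(Z(s)\cap C_{i})\le L\cdot C_{i}$ for possibly singular $C_{i}$ is exactly the point that needs care.
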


In particular, if $X$ is a non-singular surface, we can cite the following corollary.

\begin{Corollary}{\emph{Corollary 3.2 in \cite{shhansen}}}\label{CorMDNefD}.
Assume furthermore that $X$ is a non-singular surface and
that $H$ is a nef divisor on $X$ with $H\cdot C_{i}>0$ for all $i$. Then,
\begin{equation}
\mathit{l}\leq\frac{L\cdot H}{min_{i}\left\{C_{i}\cdot H\right\}}
\end{equation}

\noindent Consequently, if $L\cdot H<C_{i}\cdot H$ for all $i$, we have $\mathit{l}=0$ and
\begin{equation*}
d\geq n-m,
\end{equation*}
where $m=\sum_{i=1}^{a}L\cdot C_{i}$.
\end{Corollary}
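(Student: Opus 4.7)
The plan is to bound $l$ by pairing the zero divisor of a section against the nef divisor $H$, and then feed $l=0$ into Proposition \ref{ProMDEst}.

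First I would fix a global section $s\in\Gamma(X,L)$ and suppose its zero locus $Z(s)$ contains exactly the curves $C_{i_{1}},\ldots,C_{i_{l}}$ among the $C_{i}$. Since $X$ is a non-singular surface, $Z(s)$ is an effective Cartier divisor whose associated class agrees with $L$, so I may decompose
\begin{equation*}
Z(s)=\sum_{j=1}^{l}C_{i_{j}}+R,
\end{equation*}
where $R$ is effective (the residual part of the zero scheme, which no longer contains any full $C_{i}$ but may carry embedded curve components).

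Next, I would intersect both sides with $H$. Because $H$ is nef, $H\cdot\Gamma\geq 0$ for every irreducible curve $\Gamma$ on $X$, and by linearity $H\cdot R\geq 0$ for the effective divisor $R$. Using the hypothesis $H\cdot C_{i}>0$ for all $i$, this yields
\begin{equation*}
L\cdot H \;=\; Z(s)\cdot H \;=\; \sum_{j=1}^{l}C_{i_{j}}\cdot H+R\cdot H \;\geq\; l\cdot\min_{i}\{C_{i}\cdot H\}.
\end{equation*}
Dividing by the positive quantity $\min_{i}\{C_{i}\cdot H\}$ gives the bound $l\leq (L\cdot H)/\min_{i}\{C_{i}\cdot H\}$ for this particular $s$, and since the right-hand side is independent of $s$, taking the supremum yields the claimed inequality for $l$.

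For the second assertion, the assumption $L\cdot H<C_{i}\cdot H$ for every $i$ forces $L\cdot H<\min_{i}\{C_{i}\cdot H\}$, so the displayed bound gives $l<1$; as $l$ is a non-negative integer this means $l=0$. Plugging $l=0$ into the inequality of Proposition \ref{ProMDEst} gives $d\geq n-\sum_{i=1}^{a}L\cdot C_{i}=n-m$, as required. The only mildly subtle step is the non-negativity $R\cdot H\geq 0$; this is where the nef hypothesis on $H$ (rather than merely $H\cdot C_{i}\geq 0$ for the chosen $C_{i}$) is essential, since $R$ may contain curve components different from the $C_{i}$.
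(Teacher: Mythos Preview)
Your argument is correct and is exactly the standard proof of this corollary. Note, however, that the present paper does not supply its own proof of Corollary~\ref{CorMDNefD}: the statement is quoted verbatim from \cite{shhansen} (as indicated in the header) and used as a black box, so there is no in-paper argument to compare against. The original proof in \cite{shhansen} proceeds precisely as you do: write the zero divisor of a section as the sum of the contained $C_{i_j}$ plus an effective residual, intersect with the nef divisor $H$, and use $H\cdot R\geq 0$ to obtain $L\cdot H\geq l\cdot\min_i\{C_i\cdot H\}$; the consequence $l=0$ then feeds directly into Proposition~\ref{ProMDEst}. Your remark that nefness of $H$ (and not merely $H\cdot C_i>0$) is needed to control $R\cdot H$ is exactly the point.
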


\section{Some EC-Codes on Projective Bundles over Standard Deligne--Lusztig Surfaces}

This section is devoted to the computation of lower bounds for the parameters of certain error-correcting codes on projective bundles over standard Deligne--Lusztig surfaces.

\medskip

Given a standard Deligne--Lusztig surface $S$ of type $A_{2}(\mathbb{F}_{q}), {}^{2}A_{3}(\mathbb{F}_{q^{2}}), {}^{2}A_{4}(\mathbb{F}_{q^{2}})$ or $C_{2}(\mathbb{F}_{q})$, for a suitable parabolic subgroup $P$ of $G$ we have a commutative diagram:
\begin{equation}\label{Diag1}
\xymatrix{ S\ar[r]^{j}\ar[d]_{\rho} & G/B\ar[d]^{\pi} \\
           Z\ar[r]^{i} & G/P\cong\mathbb{P}^{N-1}}
\end{equation}
where 
\begin{align}
Z= & \begin{cases} \mathbb{P}^{2} & \text{if $S=A_{2}(\mathbb{F}_{q})$}, \\
                   H_{0}\equiv X_{0}^{q+1}+X_{1}^{q+1}+X_{2}^{q+1}+X_{3}^{q+1}=0 & \text{if $S={}^{2}A_{3}(\mathbb{F}_{q^{2}})$}, \\
									 H_{0}\equiv X_{0}^{q}X_{3}-X_{0}X_{3}^{q}+X_{1}X_{2}^{q}-X_{1}^{q}X_{2}=0 & \text{if $S=C_{2}(\mathbb{F}_{q})$}, \\
									 \end{cases}
\end{align}
and $Z=H_{0}\cap H_{1}$ with
\begin{align}
H_{0} & \equiv X_{0}^{q+1}+X_{1}^{q+1}+X_{2}^{q+1}+X_{3}^{q+1}+X_{4}^{q+1}=0, \\
H_{1} & \equiv X_{0}^{q^{3}+1}+X_{1}^{q^{3}+1}+X_{2}^{q^{3}+1}+X_{3}^{q^{3}+1}+X_{4}^{q^{3}+1}=0,
\end{align}
if $S={}^{2}A_{4}(\mathbb{F}_{q^{2}})$, $i$ is an embedding, $j$ is finite, $\pi$ is locally trivial (in the Zariski topology) and $\rho$ is birational and surjective; see Sect. 5,6,7,8 in \cite{Rodier00}. 

Moreover, the surface $S$ is isomorphic to the blow-up of  (see Definition p.163 in \cite{Hartshorne77}) \color{black} $Z$ at a certain set of points in the cases $S=A_{2}(\mathbb{F}_{q}), {}^{2}A_{3}(\mathbb{F}_{q^{2}}), C_{2}(\mathbb{F}_{q})$, and all its rational points are distributed equally on the disjoint rational curves $B_{i}$ constituting the irreducible components of the divisor $D_{1}\subset S$, whereas in the remaining case, $S={}^{2}A_{4}$, $Z$ is obtained from $S$ by contracting the disjoint hermitian curves $A_{i}$ constituting the irreducible components of the divisor $D_{1}\subset S$ and all the rational points of $S$ are distributed equally on the disjoint rational curves $B_{i}$ constituting the irreducible components of the divisor $D_{2}\subset S$.

The following theorem constitutes the main result of this paper, where explicit lower bounds for the parameters of a certain class of codes on projective bundles over Deligne--Lusztig surfaces are given.

\begin{Theorem}\label{ThmStdDLSur}
Let $S$ be a standard Deligne--Lusztig surface of type $A_{2}(\mathbb{F}_{q}), {}^{2}A_{3}(\mathbb{F}_{q^{2}}), {}^{2}A_{4}(\mathbb{F}_{q^{2}})$ or $C_{2}(\mathbb{F}_{q})$, and let $V$ be a vector bundle of rank $2$ defined over $S$. Consider the projective bundle $T=P(V)$ over $S$, that is $p: T\rightarrow S$. Then, for some $a,b>0$, we can construct a code on $T$ over $\mathbb{F}_{q^{\delta}}$ with parameters
\begin{enumerate}
\item $n=\#S(\mathbb{F}_{q^{\delta}})\#\mathbb{P}^{1}(\mathbb{F}_{q^{\delta}})$;
\item $k=h^{0}(S,Symm^{b}(V)\otimes\mathcal{O}_{S}(aD_{j}))$;
\item $d\geq n-(-bc_{1}(W_{1})\#\left\{B_{i}\right\}+aD_{j}\cdot D_{i})\#\mathbb{P}^{1}(\mathbb{F}_{q^{\delta}})-(\#S(\mathbb{F}_{q^{\delta}})-(-bc_{1}(W_{1})\#\left\{B_{i}\right\}+aD_{j}\cdot D_{i}))b$
if $(-bc_{1}(W_{1})\#\left\{B_{i}\right\}+aD_{j}\cdot D_{i})>0$, and $d\geq n-(\#S(\mathbb{F}_{q^{\delta}}))b$ otherwise,
\end{enumerate}
where
\begin{align*} 
\delta= & \begin{cases}
1 & \text{if $S$ is of type $A_{2}, C_{2}$,} \\
2 & \text{if $S$ is of type ${}^{2}A_{3}, {}^{2}A_{4}$,} \\
\end{cases} & D_{i}= & \begin{cases}
D_{1} & \text{if $S$ is of type $A_{2}, {}^{2}A_{3}, C_{2}$,} \\
D_{2} & \text{if $S$ is of type ${}^{2}A_{4}$,} \\
\end{cases} \\
D_{j}= & \begin{cases}
D_{2} & \text{if $S$ is of type $A_{2}, {}^{2}A_{3}, C_{2}$,} \\
D_{1} & \text{if $S$ is of type ${}^{2}A_{4}$,} \\
\end{cases} & & 
\end{align*}
and $c_{1}(W_{1})$ denotes the first Chern class of the line subbundle of minimum degree of the restricted vector bundle $i_{B_{i}}^{*} V$ over $B_{i}$.
\end{Theorem}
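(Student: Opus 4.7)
The overall plan is to set $L=\mathcal{O}_{T}(b)\otimes p^{*}\mathcal{O}_{S}(aD_{j})$ on $T=P(V)$ and apply Proposition~\ref{ProMDEst} with the family of vertical curves $C_{P}=p^{-1}(P)$ indexed by $P\in S(\mathbb{F}_{q^{\delta}})$. Part (i) is immediate from the fact that $p:T\to S$ is a Zariski-locally trivial $\mathbb{P}^{1}$-bundle, so every $\mathbb{F}_{q^{\delta}}$-rational point of $S$ contributes exactly $\#\mathbb{P}^{1}(\mathbb{F}_{q^{\delta}})$ rational points to $T$. Part (ii) follows from the projection formula $p_{*}(\mathcal{O}_{T}(b)\otimes p^{*}\mathcal{O}_{S}(aD_{j}))=\mathrm{Symm}^{b}(V)\otimes\mathcal{O}_{S}(aD_{j})$.

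For part (iii) I identify the data of Proposition~\ref{ProMDEst}: there are $\#S(\mathbb{F}_{q^{\delta}})$ irreducible rational curves $C_{P}$, each with $N=\#\mathbb{P}^{1}(\mathbb{F}_{q^{\delta}})$ rational points, and $L\cdot C_{P}=b$ because $\mathcal{O}_{T}(1)$ restricts to $\mathcal{O}_{\mathbb{P}^{1}}(1)$ on a fiber while $p^{*}\mathcal{O}_{S}(aD_{j})$ restricts trivially; thus $\eta=b$. The heart of the proof is to control
\begin{equation*}
l=\sup_{s\in\Gamma(T,L)}\#\{P\in S(\mathbb{F}_{q^{\delta}}):C_{P}\subset Z(s)\}.
\end{equation*}
The crucial geometric input recalled before the statement is that every rational point of $S$ lies on the disjoint rational curves $B_{i}$ constituting $D_{i}$; hence I can restrict attention to $P\in B_{i}(\mathbb{F}_{q^{\delta}})$ and bound
\begin{equation*}
l_{i}=\#\{P\in B_{i}(\mathbb{F}_{q^{\delta}}):C_{P}\subset Z(s)\}
\end{equation*}
on each ruled surface $R_{i}=p^{-1}(B_{i})\cong P(V|_{B_{i}})$, then sum.

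To bound $l_{i}$, I decompose $V|_{B_{i}}=W_{1}\oplus W_{2}$ (possible because $B_{i}\cong\mathbb{P}^{1}$) and consider the section $\sigma_{i}\subset R_{i}$ corresponding to the minimum-degree subbundle $W_{1}$. Standard ruled-surface formulae yield $\sigma_{i}^{2}=c_{1}(W_{2})-c_{1}(W_{1})\geq 0$ and $\mathcal{O}_{R_{i}}(1)\cdot\sigma_{i}=-c_{1}(W_{1})$, so $s|_{R_{i}}$ defines a divisor of class $b\,\mathcal{O}_{R_{i}}(1)+(aD_{j}\cdot B_{i})F$, where $F$ is a fiber, whose intersection with $\sigma_{i}$ is
\begin{equation*}
Z(s|_{R_{i}})\cdot\sigma_{i}=-bc_{1}(W_{1})+aD_{j}\cdot B_{i}.
\end{equation*}
Writing $Z(s|_{R_{i}})=l_{i}F+m\sigma_{i}+Z''$ with $m\geq 0$ and $Z''$ effective and containing neither $\sigma_{i}$ nor a fiber, and using $F\cdot\sigma_{i}=1$ together with $m\sigma_{i}^{2}\geq 0$ and $Z''\cdot\sigma_{i}\geq 0$, one reads off $l_{i}\leq -bc_{1}(W_{1})+aD_{j}\cdot B_{i}$. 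Because the $B_{i}$ are $G^{F}$-conjugate, $c_{1}(W_{1})$ is independent of $i$, so summing over $i$ and using $D_{i}=\sum_{i}B_{i}$ gives $l\leq -bc_{1}(W_{1})\#\{B_{i}\}+aD_{j}\cdot D_{i}$. Substituting into $d\geq n-lN-(\#S(\mathbb{F}_{q^{\delta}})-l)\eta$ with $N=\#\mathbb{P}^{1}(\mathbb{F}_{q^{\delta}})$ and $\eta=b$ yields the first bound; when the upper bound on $l$ is non-positive I simply take $l=0$ and recover the simpler second bound.

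The main technical obstacle is the intersection-theoretic step on each $R_{i}$: one must choose the correct auxiliary curve, namely the section $\sigma_{i}$ of non-negative self-intersection attached to the minimum-degree subbundle, so that the inequality $Z''\cdot\sigma_{i}\geq 0$ is automatically available; one must also verify that $c_{1}(W_{1})$ can be taken uniformly over the components of $D_{i}$, which is the point at which the Galois-symmetric structure of the standard Deligne--Lusztig surfaces listed in the statement becomes essential.
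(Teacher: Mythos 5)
Your argument is correct and follows essentially the same route as the paper: parts (i) and (ii) are identical, and for (iii) you restrict to the ruled surfaces $R_{i}=T_{B_{i}}$ and exploit that the class $[\sigma_{i}]=\xi+c_{1}(W_{2})F$ is nef with $[\sigma_{i}]\cdot C_{P}=1$, which is precisely the nef divisor $H$ the paper feeds into Corollary~\ref{CorMDNefD}. The only cosmetic difference is that you re-derive the conclusion of Corollary~\ref{CorMDNefD} on each $R_{i}$ by decomposing $Z(s|_{R_{i}})$ and intersecting with $\sigma_{i}$ rather than citing the corollary directly, and you add an explicit (if somewhat optimistic) remark that $c_{1}(W_{1})$ is uniform over the components $B_{i}$, an assumption the paper makes silently.
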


\begin{proof}
We will consider $\mathcal{P}$ to be the $\mathbb{F}_{q^{\delta}}-$rational points on $T$. Let $C_{1}, C_{2}, ..., C_{a}$ be the fibers over the $\mathbb{F}_{q^{\delta}}-$rational points of $S$. These disjoint lines contain all $\mathbb{F}_{q^{\delta}}-$rational points of $T$, that is,
\begin{equation}\label{EqNRPPB}
T(\mathbb{F}_{q^{\delta}})=\bigcup_{P\in S(\mathbb{F}_{q^{\delta}})} p^{-1}(P)(\mathbb{F}_{q^{\delta}}).
\end{equation}

\noindent It
 follows then that the length $n$ of the code is
\begin{equation}
n=\#\mathcal{P}=\#S(\mathbb{F}_{q^{\delta}})\cdot\#\mathbb{P}^{1}(\mathbb{F}_{q^{\delta}})
\end{equation}

\noindent Let $L$ be the line bundle $L=\mathcal{O}_{T}(b)\otimes\mathcal{O}_{T}(ap^{*}(D_{j}))$ over $T$. From Theorem 9.6 in \cite{EisenbudHarris16} and Proposition II.7.11 in \cite{Hartshorne77}, it follows that
\begin{equation}
\Gamma(T,L)\cong\Gamma(S,p_{*}L)=\Gamma(S,Symm^{b}(V)\otimes\mathcal{O}_{S}(aD_{j})),
\end{equation}
so when in the range wherein the bound on the minimum distance ensures the injectivity of the evaluation map, the dimension of the code is
\begin{equation}
k=h^{0}(S,Symm^{b}(V)\otimes\mathcal{O}_{S}(aD_{j})).
\end{equation}

\noindent Now, we will apply Proposition \ref{ProMDEst} in order to obtain the bound for the minimum distance. It is worth noting that, in the cases we are interested in, $C_{i}=p^{-1}(P)$ with $P\in S(\mathbb{F}_{q^{\delta}})$, so the maximum number of rational points on $C_{i}$ will be $N=\#\mathbb{P}^{1}(\mathbb{F}_{q^{\delta}})$. Moreover, by Lemma 9.7 in \cite{EisenbudHarris16}, we have that
\begin{equation}
L\cdot C_{i}=b.
\end{equation}

\noindent From Equation \ref{EqNRPPB}, we know that $\bigcup_{i=1}^{\#S(\mathbb{F}_{q^{\delta}})} C_{i}\subset p^{*}D_{i}$. Now, $B_{i}$ is an irreducible component of $D_{i}$, it is rational and $T_{B_{i}}=p^{-1}(B_{i})$ will be isomorphic to the projective bundle $p_{T_{B_{i}}}: P(i_{B_{i}}^{*} V)\rightarrow B_{i}$, where $i_{B_{i}}: B_{i}\rightarrow S$ denotes the closed embedding. As a consequence, by Theorem 9.6 in \cite{EisenbudHarris16}, the Chow ring of $T_{B_{i}}$ is isomorphic to 
\begin{equation}
A^{\bullet}(T_{B_{i}})\cong A^{\bullet}(B_{i})\left[\xi\right]/(\xi^{2}+c_{1}(i_{B_{i}}^{*} V)\xi)
\end{equation}
where $\xi$ is the hyperplane section in $T_{B_{i}}$. It is worth noting that since $B_{i}$ is rational, then by \linebreak Corollary V.2.14 in \cite{Hartshorne77}, $i_{B_{i}}^{*} V=W_{1}\oplus W_{2}$, and we can always suppose that $deg(W_{1})\leq deg(W_{2})$.\\

\noindent Now, if we restrict the line bundle $L$ to $T_{B_{i}}$, then $L\vert_{T_{B_{i}}}\cong\mathcal{O}_{T_{B_{i}}}(b)\otimes\mathcal{O}_{T_{B_{i}}}(p_{B_{i}}^{*}i_{B_{i}}^{*}(aD_{j}))$. Let $H=\xi+c_{1}(W_{2})F$ be a nef divisor on $T_{B_{i}}$ (see Theorem V.2.17 \cite{Hartshorne77}). Then, since $H\cdot C_{i}=1$, it follows by Corollary \ref{CorMDNefD} that
\begin{equation}
\mathit{l}_{T_{B_{i}}}\leq H\cdot L\vert_{B_{i}}=-bc_{1}(W_{1})+aD_{j}\cdot B_{i},
\end{equation}
provided that $(-bc_{1}(W_{1})+aD_{j}\cdot B_{i})>0$, and $\mathit{l}_{T_{B_{i}}}=0$ otherwise.
As $D_{i}=\sqcup B_{i}$, we can conclude the following bound for the minimum distance
\vspace{-12pt}

\begin{equation}
d\geq n-(-bc_{1}(W_{1})\#\left\{B_{i}\right\}+aD_{j}\cdot D_{i})\#\mathbb{P}^{1}(\mathbb{F}_{q^{\delta}})-(\#S(\mathbb{F}_{q^{\delta}})-(-bc_{1}(W_{1})\#\left\{B_{i}\right\}+aD_{j}\cdot D_{i}))b
\end{equation}
if $(-bc_{1}(W_{1})\#\left\{B_{i}\right\}+aD_{j}\cdot D_{i})>0$ and
\begin{equation}
d\geq n-(\#S(\mathbb{F}_{q^{\delta}}))b
\end{equation}
otherwise.
\end{proof}

Before computing the explicit parameters of some families of these codes, we need the following auxiliary result.

\begin{Lemma}\label{LemSADSLB}
Let $Y$ be a variety and $V=V_{1}\oplus V_{2}$ a vector bundle of rank $2$ over $Y$, that is, a direct sum of two line bundles $V_{1}$ and $V_{2}$. Then, the symmetric algebra of $V$ satisfies
\begin{equation}
Symm(V)\cong\bigotimes_{i=1}^{2} Symm(V_{i})=\bigoplus_{i_{1},i_{2} \in \mathbb{N}^{2}} Symm^{i_{1}}V_{1} \otimes Symm^{i_{2}}V_{2}\cong \bigoplus_{i_{1},i_{2} \in \mathbb{N}^{2}} V_{1}^{\otimes i_{1}} \otimes V_{2}^{\otimes i_{2}},
\end{equation}
\end{Lemma}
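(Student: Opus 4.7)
The plan is to prove the three isomorphisms in sequence, each of which follows from a standard fact about symmetric algebras of locally free sheaves on a scheme.

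For the first isomorphism, $Symm(V_{1}\oplus V_{2})\cong Symm(V_{1})\otimes Symm(V_{2})$, I would invoke the universal property of the symmetric algebra: for any commutative $\mathcal{O}_{Y}$-algebra $\mathcal{A}$, one has $\mathrm{Hom}_{\mathcal{O}_{Y}\text{-alg}}(Symm(W),\mathcal{A})=\mathrm{Hom}_{\mathcal{O}_{Y}}(W,\mathcal{A})$. Since $V_{1}\oplus V_{2}$ is the coproduct in the category of $\mathcal{O}_{Y}$-modules, while $\otimes$ is the coproduct in the category of commutative $\mathcal{O}_{Y}$-algebras, both $Symm(V_{1}\oplus V_{2})$ and $Symm(V_{1})\otimes Symm(V_{2})$ represent the same functor, and Yoneda's lemma delivers the isomorphism. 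Alternatively, one can pass to a trivializing affine cover and use the classical module-theoretic identity $\mathrm{Sym}_{R}(M\oplus N)\cong \mathrm{Sym}_{R}(M)\otimes_{R}\mathrm{Sym}_{R}(N)$, then glue.

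The second equality is the bigrading expansion of the tensor product of two graded algebras: writing each factor as a direct sum of its homogeneous components $Symm^{i_{1}}(V_{1})$ and $Symm^{i_{2}}(V_{2})$ and distributing the tensor product over the direct sum yields the indicated $\mathbb{N}^{2}$-graded decomposition.

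For the third isomorphism, the essential ingredient is that for a line bundle $L$ on $Y$ one has $Symm^{n}(L)\cong L^{\otimes n}$. This can be checked locally: on a trivializing open $U$ where $L|_{U}=\mathcal{O}_{U}\cdot e$, both sheaves are free of rank one, generated respectively by $e^{n}$ and by $e\otimes\cdots\otimes e$; under a change of trivialization $e\mapsto u\cdot e$ with $u\in\mathcal{O}_{U}^{*}$, both generators transform by $u^{n}$, so the gluing cocycles coincide. Applying this factor by factor on $V_{1}$ and $V_{2}$ converts each summand $Symm^{i_{1}}(V_{1})\otimes Symm^{i_{2}}(V_{2})$ into $V_{1}^{\otimes i_{1}}\otimes V_{2}^{\otimes i_{2}}$, yielding the final isomorphism. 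The only (mild) obstacle is keeping track of canonicity and $\mathcal{O}_{Y}$-linearity at each stage; once the universal property is in hand, the argument is formal bookkeeping.
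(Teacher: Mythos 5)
Your proposal is correct and follows essentially the same route as the paper: the universal property of the symmetric algebra for the first isomorphism, the bigraded expansion of the tensor product for the equality, and the identification $Symm^{n}(L)\cong L^{\otimes n}$ for line bundles for the last isomorphism. You simply supply more detail (Yoneda/local gluing arguments) where the paper's proof cites the facts more briefly.
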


\begin{proof}
The first isomorphism is a consequence of the universal property for symmetric algebras, the second equality is just by definition and the last isomorphism follows from the fact that, for a line bundle $V_{i}$, we have
\begin{equation} 
V_{i}^{\otimes n}\cong Symm^{n}V_{i}.
\end{equation}
\end{proof}

 The following corollaries give explicit bounds for the parameters of certain families of codes on projective bundles of rank $2$ over Deligne-Lusztig surfaces of type $A_{2}$ and ${}^{2}A_{4}$, obtained by restricting the previous results to certain vector bundles defined over them.

\color{black}
\begin{Corollary}\label{CorCodA2}
Let $S_{1}$ be the Deligne--Lusztig surface of type $A_{2}$ defined over the field $\mathbb{F}_{q}$. Consider some $b$, such that $0<b<(q+1)$, and $V_{i}=\mathcal{O}_{S_{1}}(n_{i}H-\sum_{j=1}^{q^{2}+q+1} m_{i,j}B_{j})$ for $i=1,2$, where $H=\pi^{*}(\mathcal{O}_{\mathbb{P}^{2}}(1))$, verifying for any pair $i_{1},i_{2}\in\mathbb{N}$ with $i_{1}+i_{2}=b$:

\begin{enumerate}
\item $(i_{1}n_{1}+i_{2}n_{2})\leq 3(q-1)$;
\item $(i_{1}n_{1}+i_{2}n_{2})\geq\sum_{j=1}^{3}(i_{1}m_{1,j}+i_{2}m_{2,j})$;
\item $(i_{1}m_{1,j}+i_{2}m_{2,j})\geq (i_{1}m_{1,j+1}+i_{2}m_{2,j+1})$;
\item  $3(i_{1}n_{1}+i_{2}n_{2})>\sum_{j=1}^{q^{2}+q+1}(i_{1}m_{1,j+1}+i_{2}m_{2,j+1})$. 
\end{enumerate}
\color{black}

\noindent Let $T_{1}$ be the projective bundle $P(V_{1}\oplus V_{2})$ over $S_{1}$, $p_{1}: T_{1}\rightarrow S_{1}$. Then, we can construct a code on $T_{1}$ over $\mathbb{F}_{q}$ with parameters
\begin{enumerate}
\item $n=(q^{2}+q+1)(q+1)^{2},$;
\item $k=\sum_{\substack{i_{1},i_{2}\in\mathbb{N}^{2} \\ i_{1}+i_{2}=b}}\frac{1}{2}((i_{1}n_{1}+i_{2}n_{2})((i_{1}n_{1}+i_{2}n_{2})+3)-\sum_{j=1}^{q^{2}+q+1} (i_{1}m_{1,j}+i_{2}m_{2,j})((i_{1}m_{1,j}+i_{2}m_{2,j})+1))+1,$;
\item $d\geq n-(q^{2}+q+1)(q+1)b.$.
\end{enumerate}
\end{Corollary}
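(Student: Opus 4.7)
The plan is to apply Theorem \ref{ThmStdDLSur} with $S=S_{1}$, $V=V_{1}\oplus V_{2}$, and $a=0$ (so the line bundle used on $T_{1}$ is simply $L=\mathcal{O}_{T_{1}}(b)$), and to unwind the resulting formulas using the explicit description of $S_{1}$ as the blow-up of $\mathbb{P}^{2}$ at its $q^{2}+q+1$ rational points.

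First I would identify the invariants entering the theorem. For $S_{1}$ of type $A_{2}$, Theorem \ref{ThmStdDLSur} gives $\delta=1$, $D_{i}=D_{1}$, $D_{j}=D_{2}$, and the components $B_{1},\dots,B_{q^{2}+q+1}$ of $D_{1}$ are precisely the exceptional curves of the blow-up $\rho\colon S_{1}\to\mathbb{P}^{2}$. A standard point-count (each blown-up $\mathbb{F}_{q}$-point of $\mathbb{P}^{2}$ is replaced by a $\mathbb{P}^{1}$) yields $\#S_{1}(\mathbb{F}_{q})=(q^{2}+q+1)(q+1)$, and since $\#\mathbb{P}^{1}(\mathbb{F}_{q})=q+1$, the length formula gives $n=(q^{2}+q+1)(q+1)^{2}$.

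For $k$, Lemma \ref{LemSADSLB} decomposes $Symm^{b}(V_{1}\oplus V_{2})=\bigoplus_{i_{1}+i_{2}=b}V_{1}^{\otimes i_{1}}\otimes V_{2}^{\otimes i_{2}}$, and each summand is the line bundle $\mathcal{O}_{S_{1}}\bigl((i_{1}n_{1}+i_{2}n_{2})H-\sum_{j}(i_{1}m_{1,j}+i_{2}m_{2,j})B_{j}\bigr)$. The standard intersection data on a blow-up ($H^{2}=1$, $H\cdot B_{j}=0$, $B_{i}\cdot B_{j}=-\delta_{ij}$), together with $K_{S_{1}}=-3H+\sum_{j}B_{j}$ and $\chi(\mathcal{O}_{S_{1}})=1$, feed directly into Riemann--Roch for surfaces to give the claimed Euler characteristic. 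Conditions (1)--(4) are designed to force non-speciality ($h^{1}=h^{2}=0$) for each summand: condition (4) rewrites as $L\cdot(-K_{S_{1}})>0$ (Kodaira-type positivity), while (1)--(3) control the assigned base-point scheme on $\mathbb{P}^{2}$. Once $h^{0}=\chi$ is known, summing over $(i_{1},i_{2})$ with $i_{1}+i_{2}=b$ produces the stated $k$.

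For $d$, the key computation is $c_{1}(W_{1})$. Since $V=V_{1}\oplus V_{2}$ is already a direct sum of line bundles, the restriction to $B_{i}\cong\mathbb{P}^{1}$ splits as $i_{B_{i}}^{*}V\cong\mathcal{O}_{B_{i}}(m_{1,i})\oplus\mathcal{O}_{B_{i}}(m_{2,i})$ (the degrees being read off from $V_{s}\cdot B_{i}=m_{s,i}$). Hence $c_{1}(W_{1})=\min(m_{1,i},m_{2,i})$, non-negative under the effectivity conditions implicit in the setup, so with $a=0$ the quantity $-bc_{1}(W_{1})\#\{B_{i}\}+aD_{j}\cdot D_{i}$ is non-positive and the ``otherwise'' branch of Theorem \ref{ThmStdDLSur} delivers $d\geq n-\#S_{1}(\mathbb{F}_{q})\cdot b=n-(q^{2}+q+1)(q+1)b$. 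The main obstacle throughout is the non-speciality step: Riemann--Roch is routine, but showing $h^{1}=h^{2}=0$ for every summand requires a careful vanishing argument for linear systems on the blow-up of $\mathbb{P}^{2}$ with the prescribed multiplicities at the specific $q^{2}+q+1$ rational points, and tracing how each of (1)--(4) enters that argument is where the real work lies.
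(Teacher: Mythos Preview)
Your approach matches the paper's almost exactly: same line bundle $L=\mathcal{O}_{T_{1}}(b)$ (i.e.\ $a=0$), same count for $n$, same decomposition of $Symm^{b}(V_{1}\oplus V_{2})$ via Lemma~\ref{LemSADSLB}, and the same Riemann--Roch expression for each summand. The only substantive difference is at the step you flag as the main obstacle: rather than proving $h^{1}=h^{2}=0$ directly, the paper invokes Theorem~1.1 of Harbourne (\emph{Trans.\ Amer.\ Math.\ Soc.}\ \textbf{289} (1985), 213--226), which says that a class $F=dH-\sum m_{j}B_{j}$ that is \emph{excellent} with respect to the exceptional configuration satisfies $h^{0}(F)=\tfrac{1}{2}(F^{2}-F\cdot K)+1$; conditions (1)--(4) are exactly the hypotheses placing each summand in Harbourne's excellent range, so no ad hoc vanishing argument is needed.

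For the minimum distance the paper does not route through Theorem~\ref{ThmStdDLSur} and the computation of $c_{1}(W_{1})$; instead it argues directly that for $L=\mathcal{O}_{T_{1}}(b)$ every section meets each fiber properly (citing Proposition~9.4 of Eisenbud--Harris), whence $l=0$ and $d\ge n-\#S_{1}(\mathbb{F}_{q})\cdot b$. Your argument via $c_{1}(W_{1})=\min(m_{1,i},m_{2,i})\ge 0$ reaches the same conclusion and is perfectly in the spirit of Theorem~\ref{ThmStdDLSur}, though note that the theorem as stated asks for $a>0$; you are really reusing the computation inside its proof with $a=0$, which is harmless.
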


\begin{proof}
Since $A_{2}$ is isomorphic to the blow-up of $\mathbb{P}^{2}$ at the set of its rational points over $\mathbb{F}_{q}$ (see Sect. 5 in \cite{Rodier00}), we have that $\#S_{1}(\mathbb{F}_{q})=(q^{2}+q+1)(q+1)$. Moreover, $\#\mathbb{P}^{1}(\mathbb{F}_{q})=(q+1)$, so we can conclude that the length $n$ of the code is
\begin{equation}
n=\#\mathcal{P}=\#S_{1}(\mathbb{F}_{q})(q+1)=(q^{2}+q+1)(q+1)^{2}
\end{equation}

\noindent Let $L$ be the line bundle $L=\mathcal{O}_{T_{1}}(b)$. In order to compute the dimension of the code, we have $dim \Gamma(T_{1},L)=dim \Gamma(S_{1},p_{1*}L)$, where $p_{1*}L\cong Symm^{b}(V_{1}\oplus V_{2})$.
As a result of \mbox{Lemma \ref{LemSADSLB} },
\vspace{-18pt}

\begin{equation}
Symm^{b}(V_{1}\oplus V_{2})=\bigoplus_{\substack{i_{1},i_{2} \in \mathbb{N}^{2} \\ i_{1}+i_{2}=b}} \mathcal{O}_{S_{1}}(n_{1}H-\sum_{j=1}^{q^{2}+q+1} m_{1,j}B_{j})^{\otimes i_{1}} \otimes \mathcal{O}_{S_{1}}(n_{2}H-\sum_{j=1}^{q^{2}+q+1} m_{2,j}B_{j})^{\otimes i_{2}}.												
\end{equation}

\noindent Now, global sections of a direct sum of line bundles satisfy
\begin{equation}
\begin{split}
\Gamma(S_{1},\bigoplus_{\substack{i_{1},i_{2}\in\mathbb{N}^{2} \\ i_{1}+i_{2}=b}} \mathcal{O}_{S_{1}}(n_{1}H-\sum_{j=1}^{q^{2}+q+1} m_{1,j}B_{j})^{\otimes i_{1}} \otimes \mathcal{O}_{S_{1}}(n_{2}H-\sum_{j=1}^{q^{2}+q+1} m_{2,j}B_{j})^{\otimes i_{2}}) & \cong \\
\bigoplus_{\substack{i_{1},i_{2}\in\mathbb{N}^{2} \\ i_{1}+i_{2}=b}}\Gamma(S_{1},\mathcal{O}_{S_{1}}(n_{1}H-\sum_{j=1}^{q^{2}+q+1} m_{1,j}B_{j})^{\otimes i_{1}} \otimes \mathcal{O}_{S_{1}}(n_{2}H-\sum_{j=1}^{q^{2}+q+1} m_{2,j}B_{j})^{\otimes i_{2}}), & 
\end{split}
\end{equation}
and finally, we can conclude
\begin{equation}
\begin{split}
\bigoplus_{\substack{i_{1},i_{2}\in\mathbb{N}^{2} \\ i_{1}+i_{2}=b}}\Gamma(S_{1},\mathcal{O}_{S_{1}}(n_{1}H-\sum_{j=1}^{q^{2}+q+1} m_{1,j}B_{j})^{\otimes i_{1}} \otimes \mathcal{O}_{S_{1}}(n_{2}H-\sum_{j=1}^{q^{2}+q+1} m_{2,j}B_{j})^{\otimes i_{2}}) & \cong \\
\bigoplus_{\substack{i_{1},i_{2}\in\mathbb{N}^{2} \\ i_{1}+i_{2}=b}}\Gamma(S_{1},\mathcal{O}_{S_{1}}((i_{1}n_{1}+i_{2}n_{2})H-\sum_{j=1}^{q^{2}+q+1} (i_{1}m_{1,j}+i_{2}m_{2,j})B_{j})) & 
\end{split}
\end{equation}

\noindent Furthermore, by the hypothesis of the theorem, $\mathcal{O}_{S_{1}}((i_{1}n_{1}+i_{2}n_{2})H-\sum_{j=1}^{q^{2}+q+1} (i_{1}m_{1,j}+i_{2}m_{2,j})B_{j})$ is excellent with respect to the exceptional configuration of $\pi: S_{1}\rightarrow\mathbb{P}^{2}$ (see p. 215 in \cite{Harbourne85}), so if we denote $F=\mathcal{O}_{S_{1}}((i_{1}n_{1}+i_{2}n_{2})H-\sum_{j=1}^{q^{2}+q+1} (i_{1}m_{1,j}+i_{2}m_{2,j})B_{j})$, then by Theorem 1.1 in \cite{Harbourne85}:
\vspace{-12pt}

\begin{equation}
\begin{split}
h^{0}(S_{1}, F) & =\frac{1}{2}(f\cdot f-f\cdot k_{S_{1}})+1 \\
                & =\frac{1}{2}((i_{1}n_{1}+i_{2}n_{2})((i_{1}n_{1}+i_{2}n_{2})+3)-\sum_{j=1}^{q^{2}+q+1} (i_{1}m_{1,j}+i_{2}m_{2,j})((i_{1}m_{1,j}+i_{2}m_{2,j})+1))+1.
\end{split}
\end{equation}

\noindent This allows us to conclude that, when in the range wherein the bound on the minimum distance ensures the injectivity of the evaluation map, the dimension of the code is
\vspace{-12pt}

\begin{equation}
k=\sum_{\substack{i_{1},i_{2}\in\mathbb{N}^{2} \\ i_{1}+i_{2}=b}}\frac{1}{2}((i_{1}n_{1}+i_{2}n_{2})((i_{1}n_{1}+i_{2}n_{2})+3)-\sum_{j=1}^{q^{2}+q+1} (i_{1}m_{1,j}+i_{2}m_{2,j})((i_{1}m_{1,j}+i_{2}m_{2,j})+1))+1.
\end{equation}
\vspace{-6pt}

\noindent Finally, by Proposition 9.4 in \cite{EisenbudHarris16}, any section of $L\vert_{T_{1 B_{i}}}$ (recall that $T_{1 B_{i}}=p_{1}^{-1}(B_{i})$) intersects a fiber $C_{i}\subset T_{1 B_{i}}$ in a hyperplane, so $\mathit{l}=0$ and we can conclude the following bound for the minimum distance:
\begin{equation}
d\geq n-(q^{2}+q+1)(q+1)b.
\end{equation}
\end{proof}

\begin{Corollary}\label{CorCod2A4}
Let $S_{2}$ be the Deligne--Lusztig surface of type ${}^{2}A_{4}$ defined over the field $\mathbb{F}_{q^{2}}$. Consider $V_{i}=\pi^{*}i^{*}\mathcal{O}_{\mathbb{P}^{4}}(t_{i})$ for $i=1,2$, and let $T_{2}$ be the projective bundle $P(V_{1}\oplus V_{2})$ over $S_{2}$, $p_{2}: T_{2}\rightarrow S_{2}$. Then, for some $0<b<(q^{2}+1)$ and for any $t_{1},t_{2}\in\mathbb{N}$ such that $(q+1)<(i_{1}t_{1}+i_{2}t_{2})<(q^{3}+1)$ for any pair $i_{1},i_{2}\in\mathbb{N}$ with $i_{1}+i_{2}=b$, we can construct a code on $T_{2}$ over $\mathbb{F}_{q^{2}}$ with parameters
\begin{enumerate}
\item $n=(q^{5}+1)(q^{3}+1)(q^{2}+1)^{2}$;
\item $k=\sum_{\substack{i_{1},i_{2}\in\mathbb{N}^{2} \\ i_{1}+i_{2}=b}}\binom{4+i_{1}t_{1}+i_{2}t_{2}}{i_{1}t_{1}+i_{2}t_{2}}-\binom{4+i_{1}t_{1}+i_{2}t_{2}-(q+1)}{i_{1}t_{1}+i_{2}t_{2}-(q+1)}$;
\item $d\geq n-(q^{5}+1)(q^{3}+1)(q^{2}+1)b$.
\end{enumerate}
\end{Corollary}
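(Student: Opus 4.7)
The plan is to follow the three-step template used in Corollary \ref{CorCodA2} --- length, then dimension, then distance --- invoking Theorem \ref{ThmStdDLSur} with the choices $L = \mathcal{O}_{T_2}(b)$ (so $a = 0$), $D_i = D_2$ and $D_j = D_1$, and specialising to the ${}^{2}A_{4}$ geometry where $\rho: S_2 \to Z$ is a contraction of the Hermitian components $A_i$ of $D_1$ rather than a blow-up. The length is immediate: the rational-point count of the standard ${}^{2}A_{4}$ surface (see Sect.~8 in \cite{Rodier00}) is $\#S_2(\mathbb{F}_{q^2}) = (q^5+1)(q^3+1)(q^2+1)$, and multiplying by $\#\mathbb{P}^1(\mathbb{F}_{q^2}) = q^2+1$ gives $n$ via Theorem \ref{ThmStdDLSur}(1).

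For the dimension I would apply Theorem 9.6 in \cite{EisenbudHarris16} together with Proposition II.7.11 in \cite{Hartshorne77} to rewrite $\Gamma(T_2, \mathcal{O}_{T_2}(b)) \cong \Gamma(S_2, Symm^b(V_1 \oplus V_2))$, and then use Lemma \ref{LemSADSLB} to decompose this symmetric power as the direct sum, over pairs $(i_1,i_2)\in\mathbb{N}^2$ with $i_1+i_2=b$, of the line bundles $V_1^{\otimes i_1}\otimes V_2^{\otimes i_2} \cong \pi^* i^* \mathcal{O}_{\mathbb{P}^4}(m)$ with $m=i_1t_1+i_2t_2$. Because $\rho$ is birational onto the normal variety $Z$, the identity $\rho_*\mathcal{O}_{S_2}=\mathcal{O}_Z$ and the projection formula collapse the count to $h^0(Z,\mathcal{O}_Z(m))$. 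Since $Z=H_0\cap H_1$ is a strict complete intersection of bidegree $(q+1,q^3+1)$ in $\mathbb{P}^4$ (Theorem \ref{ThmHyp}), the $m$-twisted Koszul resolution
\begin{equation*}
0 \to \mathcal{O}_{\mathbb{P}^4}(m-(q+1)-(q^3+1)) \to \mathcal{O}_{\mathbb{P}^4}(m-(q+1)) \oplus \mathcal{O}_{\mathbb{P}^4}(m-(q^3+1)) \to \mathcal{O}_{\mathbb{P}^4}(m) \to \mathcal{O}_Z(m) \to 0
\end{equation*}
combined with the vanishing of $H^i(\mathbb{P}^4,\mathcal{O}(r))$ for $0<i<4$ forces, in the range $q+1<m<q^3+1$, the two terms involving a shift by $q^3+1$ to drop out in degree zero. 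What survives is $h^0(Z,\mathcal{O}_Z(m)) = \binom{m+4}{m} - \binom{m-(q+1)+4}{m-(q+1)}$, and summing over $(i_1,i_2)$ recovers the claimed value of $k$.

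For the distance I would verify that we land in the ``otherwise'' case of Theorem \ref{ThmStdDLSur}. With $a=0$, that case amounts to $c_1(W_1)\geq 0$. Now each $V_k|_{B_i} \cong \rho^* \mathcal{O}_Z(t_k)|_{B_i}$ is the pull-back, along $\rho|_{B_i}$, of a positive multiple of the hyperplane class on $Z$; since $\rho$ contracts only the Hermitian components $A_i\subset D_1$ and none of the rational components $B_i\subset D_2$, the restriction $\rho|_{B_i}$ is non-constant and each $V_k|_{B_i}$ has non-negative degree --- here $t_1,t_2\geq 1$ are forced by evaluating the hypothesis $m>q+1$ at $(i_1,i_2)=(b,0)$ and $(0,b)$. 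Hence $c_1(W_1)\geq 0$, Theorem \ref{ThmStdDLSur} yields $d\geq n-\#S_2(\mathbb{F}_{q^2})\,b = n-(q^5+1)(q^3+1)(q^2+1)b$. The main obstacle I anticipate is the careful bookkeeping in the Koszul cohomology that collapses $h^0(Z,\mathcal{O}_Z(m))$ to the two-term binomial formula under the precise range $q+1<m<q^3+1$, together with the geometric verification that $\rho|_{B_i}$ is genuinely non-constant, so that $c_1(W_1)$ is non-negative and the ``otherwise'' branch is the correct one to invoke.
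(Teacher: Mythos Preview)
Your proposal is correct and follows essentially the same three-step template as the paper. The only variations are cosmetic: for the dimension you invoke the full Koszul resolution of $\mathcal{O}_Z$ in one shot, whereas the paper splits it into the two successive short exact sequences for $H_0\subset\mathbb{P}^4$ and $H_0\cap H_1\subset H_0$ (same cohomological computation, same range $q+1<m<q^3+1$); and for the distance you verify the ``otherwise'' branch of Theorem~\ref{ThmStdDLSur} by checking $c_1(W_1)\geq 0$ via the non-contraction of the $B_i$ under $\rho$, whereas the paper bypasses this and directly cites Proposition~9.4 in \cite{EisenbudHarris16} to see that every section of $L|_{T_{2B_i}}$ meets each fiber in a hyperplane, hence $l=0$. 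Both routes are valid and yield the identical bound.
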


\begin{proof}
Since all the rational points of $S_{2}$ are distributed equally on the disjoint rational curves $B_{i}$ constituting the irreducible components of the divisor $D_{2}\subset S_{2}$, then $\#S_{2}(\mathbb{F}_{q^{2}})=(q^{5}+1)(q^{3}+1)(q^{2}+1)$. Moreover, $\#\mathbb{P}^{1}(\mathbb{F}_{q^{2}})=(q^{2}+1)$, so we can conclude that the length $n$ of the code is
\begin{equation}
n=\#\mathcal{P}=\#S_{2}(\mathbb{F}_{q^{2}})(q^{2}+1)=(q^{5}+1)(q^{3}+1)(q^{2}+1)^{2}.
\end{equation}

\noindent Let $L$ be the line bundle $L=\mathcal{O}_{T_{2}}(b)$. In order to compute the dimension of the code, we have that $dim \Gamma(T_{2},L)=dim \Gamma(S_{2},p_{2*}L)$, with $p_{2*}L\cong Symm^{b}(V_{1}\oplus V_{2})$.
As a result of Lemma \ref{LemSADSLB} 
\begin{equation}
Symm^{b}(V_{1}\oplus V_{2})=\bigoplus_{\substack{i_{1},i_{2} \in \mathbb{N}^{2} \\ i_{1}+i_{2}=b}} \pi^{*}i^{*}\mathcal{O}_{\mathbb{P}^{4}}(t_{1})^{\otimes i_{1}} \otimes \pi^{*}i^{*}\mathcal{O}_{\mathbb{P}^{4}}(t_{2})^{\otimes i_{2}}.												
\end{equation}

\noindent Now, global sections of a direct sum of line bundles satisfy
\vspace{-12pt}
\begin{equation}
\Gamma(S_{2},\bigoplus_{\substack{i_{1},i_{2}\in\mathbb{N}^{2} \\ i_{1}+i_{2}=b}} \pi^{*}i^{*}\mathcal{O}_{\mathbb{P}^{4}}(t_{1})^{\otimes i_{1}} \otimes \pi^{*}i^{*}\mathcal{O}_{\mathbb{P}^{4}}(t_{2})^{\otimes i_{2}})\cong\bigoplus_{\substack{i_{1},i_{2}\in\mathbb{N}^{2} \\ i_{1}+i_{2}=b}}\Gamma(S_{2},\pi^{*}i^{*}\mathcal{O}_{\mathbb{P}^{4}}(t_{1})^{\otimes i_{1}} \otimes \pi^{*}i^{*}\mathcal{O}_{\mathbb{P}^{4}}(t_{2})^{\otimes i_{2}}),
\end{equation}
and since pull-back commutes with tensor product, it follows then that
\begin{align}
\bigoplus_{\substack{i_{1},i_{2}\in\mathbb{N}^{2} \\ i_{1}+i_{2}=b}}\Gamma(S_{2},\pi^{*}i^{*}\mathcal{O}_{\mathbb{P}^{4}}(t_{1})^{\otimes i_{1}} \otimes \pi^{*}i^{*}\mathcal{O}_{\mathbb{P}^{4}}(t_{2})^{\otimes i_{2}}) & \cong\bigoplus_{\substack{i_{1},i_{2}\in\mathbb{N}^{2} \\ i_{1}+i_{2}=b}}\Gamma(S_{2},\pi^{*}i^{*}\mathcal{O}_{\mathbb{P}^{4}}(i_{1}t_{1}+i_{2}t_{2}))
\end{align}

\noindent Furthermore, as $\pi$ is birational onto $Z$, we obtain (see Theorem 2.31 in \cite{Iitaka82})
\begin{equation}
\Gamma(S_{2},\pi^{*}i^{*}\mathcal{O}_{\mathbb{P}^{4}}(i_{1}t_{1}+i_{2}t_{2}))\cong\Gamma(Z,i^{*}\mathcal{O}_{\mathbb{P}^{4}}(i_{1}t_{1}+i_{2}t_{2}))
\end{equation}

\noindent Since $Z$ is the complete intersection of two hyper-surfaces $H_{0}, H_{1}\subset\mathbb{P}^{4}$ of degrees
$q+1$ and $q^{3}+1$, respectively, we obtain short exact sequences of line bundles
\begin{equation}
0\rightarrow\mathcal{O}_{\mathbb{P}^{4}}(t-(q+1))\rightarrow\mathcal{O}_{\mathbb{P}^{4}}(t)\rightarrow\mathcal{O}_{H_{0}}(t)\rightarrow 0,
\end{equation}
and
\begin{equation}
0\rightarrow\mathcal{O}_{H_{0}}(t-(q^{3}+1))\rightarrow\mathcal{O}_{H_{0}}(t)\rightarrow\mathcal{O}_{H_{0}\cap H_{1}}(t)\rightarrow 0,
\end{equation}
(see Section 7.3 in \cite{Iitaka82}). These sequences give long exact sequences of cohomology groups,
\begin{equation}
\begin{split}
0\rightarrow H^{0}(\mathbb{P}^{4},\mathcal{O}_{\mathbb{P}^{4}}(t-(q+1))) & \rightarrow H^{0}(\mathbb{P}^{4},\mathcal{O}_{\mathbb{P}^{4}}(t))\rightarrow H^{0}(H_{0},\mathcal{O}_{H_{0}}(t))\rightarrow \\
& H^{1}(\mathbb{P}^{4},\mathcal{O}_{\mathbb{P}^{4}}(t-(q+1)))\rightarrow H_{1}(\mathbb{P}^{4},\mathcal{O}_{\mathbb{P}^{4}}(t))\rightarrow \\
& H^{1}(H_{0},\mathcal{O}_{H_{0}}(t))\rightarrow H^{2}(\mathbb{P}^{4},\mathcal{O}_{\mathbb{P}^{N-1}}(t-(q+1)))\rightarrow \ldots,
\end{split}
\end{equation}
and
\begin{equation}
\begin{split}
0\rightarrow H^{0}(H_{0},\mathcal{O}_{H_{0}}(t-(q^{3}+1)) & \rightarrow H^{0}(H_{0},\mathcal{O}_{H_{0}}(t))\rightarrow \\
H^{0}(H_{0}\cap H_{1},\mathcal{O}_{H_{0}\cap H_{1}}(t))& \rightarrow H^{1}(H_{0},\mathcal{O}_{H_{0}}(t-(q^{3}+1))\rightarrow \ldots
\end{split}
\end{equation}

\noindent By the formulas of Theorem III.5.1 in \cite{Hartshorne77} for the cohomology of projective space, the first long exact sequence reduces (for any $t$) to
\begin{equation}
\begin{split}
0\rightarrow H^{0}(\mathbb{P}^{4},\mathcal{O}_{\mathbb{P}^{4}}(t-(q+1))) & \rightarrow H^{0}(\mathbb{P}^{4},\mathcal{O}_{\mathbb{P}^{4}}(t))\rightarrow \\
H^{0}(H_{0},\mathcal{O}_{H_{0}}(t))\rightarrow 0\rightarrow 0\rightarrow H^{1}(H_{0},\mathcal{O}_{H_{0}}(t))\rightarrow 0\rightarrow \ldots
\end{split}
\end{equation}

\noindent Hence, for any $t$, $H^{1}(H_{0},\mathcal{O}_{H_{0}}(t))=0$ and
\begin{align}
dim H^{0}(H_{0},\mathcal{O}_{H_{0}}(t)) & =dim H^{0}(\mathbb{P}^{4},\mathcal{O}_{\mathbb{P}^{4}}(t))-dim H^{0}(\mathbb{P}^{4},\mathcal{O}_{\mathbb{P}^{4}}(t-(q+1)) \\
& =\binom{4+t}{t}-\binom{4+t-(q+1)}{t-(q+1)}.
\end{align}

\noindent Consequently, for any $q+1<t<q^{3}+1$, the last sequence then gives
\begin{equation}
\Gamma(Z,\mathcal{O}_{Z}(t))\cong H^{0}(H_{0},\mathcal{O}_{H_{0}}(t)),
\end{equation}
so, when in the range wherein the bound on the minimum distance ensures the injectivity of the evaluation map, the dimension of the code is
\begin{equation}
k=\sum_{\substack{i_{1},i_{2}\in\mathbb{N}^{2} \\ i_{1}+i_{2}=b}}\binom{4+i_{1}t_{1}+i_{2}t_{2})}{i_{1}t_{1}+i_{2}t_{2})}-\binom{4+i_{1}t_{1}+i_{2}t_{2})-(q+1)}{i_{1}t_{1}+i_{2}t_{2})-(q+1)}.
\end{equation}

\noindent Finally, by Proposition 9.4 in \cite{EisenbudHarris16}, any section of $L\vert_{T_{2 B_{i}}}$ (recall that $T_{2 B_{i}}=p_{2}^{-1}(B_{i})$) intersects $C_{i}\subset T_{4 B_{i}}$ in an hyperplane, so $\mathit{l}=0$ and we can conclude the following bound for the minimum distance:
\begin{equation}
d\geq n-(q^{5}+1)(q^{3}+1)(q^{2}+1)b.
\end{equation}
\end{proof}

 Finally, we compute the parameters of some of the codes in Corollary \ref{CorCodA2} and Corollary \ref{CorCod2A4} for the binary case, $q=2$, as this is the most common finite field within applications.

\begin{Example}\label{ExmpA2}
Let us consider the particular case $q=2$, with $b=1$, $n_{1}=n_{2}=3$, $m_{1,j}=m_{2,j}=1$ for $j=1,2,3,$ and $m_{1,j}=m_{2,j}=0$ otherwise, for the family of codes presented in Corollary \ref{CorCodA2}. Then, we obtain codes with the following parameters:
\begin{enumerate}
\item $n=63$;
\item $k=14$;
\item $d\geq 42$.
\end{enumerate}
\end{Example}

\begin{Example}\label{Exmp2A4}
For $q=2$, let us consider the particular case $b=2$, $t_{1}=t_{2}=4$ for the family of codes presented in Corollary \ref{CorCod2A4}. Then we obtain a code with the following parameters:
\begin{enumerate}
\item $n=7425$,
\item $k=1107$,
\item and $d\geq 4455$.
\end{enumerate}

\end{Example}

\section{Conclusions}

By means of an intensive use of the intersection theory, we extend some of the previous results for codes over Deligne--Lusztig surfaces to the case of codes on projective bundles of rank $2$ over standard Deligne--Lusztig surfaces. In particular, we compute the length, dimension and give a lower bound for the minimum distance in Theorem \ref{ThmStdDLSur} for the cases of codes on projective bundles over Deligne--Lusztig surfaces of type $A_{2}(\mathbb{F}_{q}), {}^{2}A_{3}(\mathbb{F}_{q^{2}}), {}^{2}A_{4}(\mathbb{F}_{q^{2}})$ or $C_{2}(\mathbb{F}_{q})$. Moreover, in Corollary \ref{CorCodA2} and Corollary \ref{CorCod2A4}, we focus on some special families of these codes, by restricting our results to certain vector bundles over Deligne--Lusztig surfaces of type $A_{2}$ and ${}^{2}A_{4}$, in order to give a more explicit computation of their dimension. Finally, we give two examples of binary codes, as seen in Example \ref{ExmpA2} and Example \ref{Exmp2A4}, motivated by the fact that $q=2$ is the most common framework within applications. In Example \ref{ExmpA2}, we obtain an information rate $k/n=14/63$ and an error-correcting rate $\delta=d/n\geq 42/63$, and in Example \ref{Exmp2A4} we obtain an information rate $k/n=1107/7425$ and an error-correcting rate $\delta=d/n\geq 4455/7425$, concluding that both codes exhibit high error-correcting rates.

\color{black}
Some interesting problems would be to extend these results to the case of projective bundles of higher rank over standard Deligne--Lusztig surfaces following the techniques used in \cite{Nakashima06}, and to establish a more explicit understanding on how the morphism \\ $\pi: \overline{X}(w)\rightarrow Z$ of Theorem \ref{ThmHyp} behaves for Deligne--Lusztig varieties of dimension$\geq 3$ in order to construct new algebraic geometric error-correcting codes over them.



\begin{thebibliography}{99}

\bibitem[Goppa(1970)]{goppa}
Goppa, V.D. A new class of linear error-correcting codes. {\em Probl. Inf. Transm.} {\bf 1970}, {\em 6}, 300--304.

\bibitem[McEliece(1978)]{mceliece}
McEliece, R.J. A public-key cryptosystem based on algebraic coding theory. {\em DSN Prog. Rep. Jet Propuls. Lab. Pasadena} {\bf 1978}, {\em 42(44)}
, 114--116.

\bibitem[Tsfasman(1982)]{tsfasman}
Tsfasman, M.A.; Vladut, S.G.; Zink, T. Modular curves, Shimura curves, and Goppa codes, better than Varshamov-Gilbert bound {\em Math. Nachr.} {\bf 1982}, {\em 109}, 21--28.

\bibitem[Barelli(2008)]{barelli}
Barelli, E.; Beelen, P.; Datta, M.; Neiger, V.; Rosenkilde, J. Two-point codes for the generalized GK curve. {\em IEEE Trans. Inform. Theory} {\bf 2008}, {\em 64}, 6268--6276.

\bibitem[Christensen(2023)]{christensen}
Christensen, R.B.; Munuera, C.; Pereira, F.; Ruano, D. An algorithmic approach to entanglement-assisted quantum error-correcting codes from the Hermitian curve. {\em Adv. Math. Commun.} {\bf 2023}, {\em 17}, 78--97.

\bibitem[Hansen(1990)]{jphansen}
Hansen, J.P.; Stichtenoth, H. Group codes on certain algebraic curves with many rational points. {\em Appl. Algebra Engrg. Comm. Comput.} {\bf 1990}, {\em 1}, 67--77.

\bibitem[Xing(2002)]{xing}
Xing C., Chen, H. Improvements on parameters of one-point AG codes from Hermitian curves. {\em IEEE Trans. Inform. Theory} {\bf 2002}, {\em 48}, 535--537.

\bibitem[Munuera(2008)]{munuera}
Munuera, C. Sep\'ulveda, A., Torres, F., Algebraic Geometry codes from Castle curves. In {\em Coding Theory and Applications. Lecture Notes in Computer Science}; Barbero, A., Ed.; Springer: Berlin/Heidelberg, Germany, 2008; pp. 117--127.

\bibitem[Stichtenoth(1988)]{stichtenorth}
Stichtenoth, H. A note on Hermitian codes over GF($q^2$). {\em IEEE Trans. Inf. Theory} {\bf 1988}, {\em 34}, 1345--1348.

\bibitem[Tsfasman(1991)]{tsfasman2}
Tsfasman, M.A.; Vladut, S.G. \textit{Algebraic-Geometric Codes}, 3rd ed.; Springer  Kluwer: Dordrecht, The Netherlands, 1991.

\bibitem[Couvreur(2011)]{couvreur}
Couvreur, A. Construction of rational surfaces yielding good codes. {\em Finite Fields Appl.} {\bf 2011}, {\em 17}, 424--441.

\bibitem[Edokou(2007)]{edoukou}
Edokou, F. Codes defined by forms of degree 2 on hermitian surfaces and S$\o$rensen's conjecture. {\em Finite Fields Appl.} {\bf 2007}, {\em 13}, 616--627.

\bibitem[Voloch(2010)]{voloch}
Voloch, J.F. Algebraic geometric codes on surfaces. In {\em Arithmetics, Geometry, and Coding Theory (AGCT 2005)}; Volume 21; S\'eminaires et Congr\`es-Soci\'et\'e Math\'ematique de France 
: Paris, France, 2010; pp. 211--216.

\bibitem[Little(2008)]{little}
Little, J.B. Algebraic geometry codes from higher dimensional varieties. In {\em Advances in Algebraic Geometry Codes}; Serie Coding Theory Cryptology; Martinez-Moro, E., Munuera, C., Ruano, D., Eds.; World Scientific Publishing: Hackensack, NJ, USA, 2008; Volume 5, pp. 257--293.

\bibitem[Hansen(2001)]{shhansen}
Hansen, S.H. Error-correcting codes from higher-dimensional varieties. {\em Finite Fields Appl.} {\bf 2001}, {\em 7}, 531--552.


\bibitem[Springer(1998)]{Springer98}
Springer, T.A. \textit{Linear Algebraic Groups}; Birkh\"{a}user Boston: Boston, MA, USA, 1998.

\bibitem[Hansen(2002)]{Hansen02}
Hansen, S. H. Picard groups of Deligne-Lusztig varieties---With a view toward higher codimensions. {\em Beitr\"{a}ge Algebra Geom.} {\bf 2002}, {\em 43}, 9--26.

\bibitem[Lusztig(1976)]{Lusztig76}
Lusztig, G. On the Green polynomials of classical groups. {\em Proc. Lond. Math. Soc.} {\bf 1976}, {\em 33}, 443--475.

\bibitem[Rodier(2000)]{Rodier00}
Rodier, F. Nombre de points des surfaces de Deligne et Lusztig {\em J. Algebra} {\bf 2000}, {\em 227}, 706--766.

\bibitem[Hartshorne(1977)]{Hartshorne77}
Hartshorne, R. \textit{Algebraic Geometry}; Springer: New York, NY, USA; Heidelberg, Germany, 
 1977.

\bibitem[Eisenbud(2016)]{EisenbudHarris16}
 Eisenbud, D.; Harris, J. \textit{3264 and All That---A Second Course in Algebraic Geometry}; Cambridge University Press: Cambridge, UK, 2016.


\bibitem[Harbourne(1985)]{Harbourne85}
Harbourne, B. Complete linear systems on rational surfaces. {\em Trans. Amer. Math. Soc.} {\bf 1985}, {\em 289}, 213--226.


\bibitem[Iitaka(1982)]{Iitaka82}
Iitaka, S. \textit{Algebraic Geometry. An Introduction to Birational Geometry of Algebraic Varieties}; Springer: New York, NY, USA; Berlin, Germany, 1982.


\bibitem[Nakashima(2006)]{Nakashima06}
Nakashima, T. Error-correcting codes on projective bundles. {\em Finite Fields Appl.} {\bf 2006}, {\em 12}, 222--231.


\end{thebibliography}
\end{document}